\newlength\figureheight 
\newlength\figurewidth 
\theoremstyle{definition}
\newtheorem{remark}{Remark}
\newtheorem{proposition}{Proposition}
\begin{document}

\title{Efficient Importance Sampling for the Left Tail of Positive Gaussian Quadratic Forms}

\author{Chaouki Ben Issaid, Mohamed-Slim Alouini, and Ra\'ul Tempone \\
\thanks{This work was supported by KAUST and the Alexander von Humboldt foundation.}%
\thanks{The authors are with King Abdullah University of Science and Technology (KAUST), Computer, Electrical and Mathematical Science and Engineering (CEMSE) Division, Thuwal 23955-6900, Saudi Arabia. (e-mail: {chaouki.benissaid, slim.alouini, raul.tempone}@kaust.edu.sa).}%
\thanks{Ra\'ul Tempone is also with Alexander von Humboldt Professor in Mathematics for UQ, RWTH Aachen University, 52062 Aachen, Germany.
}
}

\maketitle
\thispagestyle{empty}

\begin{abstract}
Estimating the left tail of quadratic forms in Gaussian random vectors is of major practical importance in many applications. In this letter, we propose an efficient importance sampling estimator that is endowed with the bounded relative error property. This property significantly reduces the number of simulation runs required by the proposed estimator compared to naive Monte Carlo (MC), especially when the probability of interest is very small. Selected simulation results are presented to illustrate the efficiency of our estimator compared to naive MC as well as some of the well-known approximations.
\end{abstract}
\begin{IEEEkeywords}
Importance sampling, left tail, positive quadratic forms, Gaussian random vectors, bounded relative error.
\end{IEEEkeywords}
\IEEEpeerreviewmaketitle
\section{Introduction}
Quadratic forms can appear in many applications including the study of the effect of inequality between errors in terms of variance and correlation in a two-way analysis of variance \cite{box1954}, when the constrained least-squares estimator is examined \cite{HSUAN1985}, but also when investigating non-coherent detection\cite{Raphaeli1996} and combining diversity \cite[Chap. 14]{Proakis1994} in communication theory.

In general, the exact distribution of a linear combination of independent non-central chi-square variates is a challenging task. In fact, various approximations have been proposed in the literature, for example, Imhof \cite{imhof1961} and Davies \cite{davies1963}. Based on the work of Imhof \cite{imhof1961}, Davies \cite{davies1963} presented a numerical approach to invert the characteristic function of a real random variable (RV) with the aim of computing its left tail. The author showed that the method accurately produces the distribution of a central chi-squared RV for various numbers of degrees of freedom. Rice \cite{Rice1980} presented a generalization of this approach, including two numerical integration methods of inverting the characteristic function. Recently, the authors in \cite{Ramirez2019} derived a new approximation to the cumulative distribution function (CDF) of positive RVs with application to the statistics of positive-definite real Gaussian quadratic forms. As stated in \cite{Bausch2013}, there is no closed analytical expression for the CDF of a linear combination of non-central chi-square RVs, and that most of the existing methods start to fail to give an accurate result when the number of terms grows considerably.

Numerical sampling techniques, namely the Monte Carlo (MC) method, offer a feasible alternative in the absence of closed-form expressions of the quantity of interest. However, for very small probabilities encountered in many applications, MC requires large sample sizes to estimate the value of the probability and as a consequence long computational times. Alternatively, we propose in this work an efficient importance sampling (IS) estimator to estimate the probability of interest, which aims to reduce the number of required simulation runs given a certain confidence interval \cite{ch2}. To the best of our knowledge, only two works proposing IS schemes for the purpose of computing tails of quadratic forms in Gaussian random vectors have been previously published \cite{Shi2016, Shi2018}. In these works, the authors were interested in the right tail and implemented IS combined with the cross-entropy method. However, the authors did not provide any efficiency analysis of their estimators. In this letter, we are rather interested in estimating the left tail of positive quadratic forms in Gaussian random vectors, and we show that our proposed IS scheme is endowed with the bounded relative error property, a well-desired property in the field of rare events simulations.\\ 
\textbf{Contributions:} Our proposed method provides two advantages over classical approaches: $(i)$ the parameter governing the convergence in our case is explicitly determined for a fixed accuracy requirement (see Eqs. \eqref{eqs1}-\eqref{eqs2}), unlike  other methods (e.g.  \cite{imhof1961} and \cite{Ramirez2019}) where the order of truncation is often determined based on trial and error, and $(ii)$ the bounded relative error property is a guarantee that our method will accurately estimate the probability of interest, even when the probability is very small, unlike other proposed methods (e.g. \cite{davies1963} and \cite{Ramirez2019}) which may fail for certain regimes or require a bigger order of truncation (more computational time) to provide a good estimate.

The rest of this paper is organized as follows. First, we briefly describe the problem setting, and provide a lower bound for the probability of interest needed for the proof of the bounded relative error property of the proposed IS scheme. In Section \ref{section2}, we prove the efficiency of our proposed estimator. After reviewing some basic notions of IS in Section \ref{section3}, we present our approach to estimating the probability of interest in Section \ref{section4}. Prior to concluding the paper, we show some simulation results to show the efficiency of our proposed approach. Throughout this paper, vectors and matrices are are denoted by bold letters.
\section{Problem Setting}\label{section2}
Let $\bm{X} = (X_1,\dots, X_N)^{T}$ be a Gaussian random vector with PDF
\begin{align}
f_{\bm{X}}(\bm{X}) = \frac{\exp\left(-\frac{1}{2} (\bm{X}-\bm{\mu})^{T} \bm{\Sigma_X}^{-1} (\bm{X}-\bm{\mu})\right)}{\sqrt{(2\pi)^{N} |\bm{\Sigma_X}|}} ,
\end{align}
where $\bm{\mu} \in \mathbb{R}^N$ is the mean vector, $\bm{\Sigma_X } \in \mathbb{R}^{N\times N}$ is the covariance matrix, assumed to be strictly positive definite, and $|\cdot|$ represents the determinant of a matrix. For a given positive definite matrix $\bm{\Sigma} \in \mathbb{R}^{N\times N}$ and a threshold $\gamma_0 > 0$, we aim to introduce an efficient IS scheme for computing the left tail of the quadratic form $\bm{X}^{T} \bm{\Sigma} \bm{X}$ as $\gamma_0 \rightarrow 0$, i.e.,
\begin{align}\label{P}
P = \mathbb{P}(\bm{X}^{T} \bm{\Sigma} \bm{X} \leq \gamma_0).
\end{align}
Before giving a lower bound on the probability $P$, we re-write its expression more conveniently. It can be shown that the probability of interest $P$ can be re-written as \cite[Ch. 3]{QPM}
\begin{align}
P = \mathbb{P}\left(S_N = \sum_{i=1}^{N}{\lambda_i (Z_i + \alpha_i)^2} \leq \gamma_0 \right),
\end{align}
where $\bm{\Lambda} = diag(\lambda_1,\dots,\lambda_N)$ is a diagonal matrix coming from the spectral decomposition of $\bm{\Sigma_X}^{\frac{1}{2}} \bm{\Sigma} \bm{\Sigma_X}^{\frac{1}{2}}$  as $\bm{Q}^T \bm{\Lambda} \bm{Q}$ with $\bm{Q}$ being an orthogonal matrix, $\bm{Z} = \bm{Q} \bm{\Sigma_X}^{-\frac{1}{2}} (\bm{X}-\bm{\mu})$, and $\bm{\alpha} = \bm{Q} \bm{\Sigma_X}^{-\frac{1}{2}} \bm{\mu}$. Note that $\{\lambda_i\}_{i=1}^{N}$ are non-negative and that $\{Z_i\}_{i=1}^{N}$ are independent Gaussian RVs with zero mean and unit variance. At a higher level of abstraction, estimating the probability $P$ amounts to determining the CDF of a linear combination of non-central chi-squared RVs with degree 1.
\begin{remark}
If the matrix $\Sigma$ is positive semi-definite, then $A$ is also positive semi-definite. Without loss of generality, we assume that the non-zero eigenvalues are $\{\lambda_i\}_{i=1}^{d}$, where $d < N$. In this case, the probability $P$ is
\begin{align}
P = \mathbb{P}\left(S_d = \sum_{i=1}^{d}{\lambda_i (Z_i + \alpha_i)^2} \leq \gamma_0 \right).
\end{align}
\end{remark}
In the rest of this paper, we assume that $d=N$, i.e., the positive definite case, but the same reasoning applies when we simply replace $N$ with $d$ in the positive semi-definite case. The following proposition gives a lower bound on $P$.
\begin{proposition}\label{prop1}
Let $\bm{X} = (X_1,\dots, X_N)^{T}$ be a Gaussian random vector with mean $\bm{\mu}$ and covariance matrix $\bm{\Sigma_X}$, and let $\bm{\Sigma} \in \mathbb{R}^{N\times N}$ be a given positive definite matrix. For a fixed threshold $\gamma_0 > 0$, we have
\begin{align}
P = \mathbb{P}(\bm{X}^{T} \bm{\Sigma} \bm{X} \leq \gamma_0) \geq \prod_{i=1}^{N}{\left[1-Q_{\frac{1}{2}}\left(\alpha_i,\sqrt{\frac{\gamma_0}{N \lambda_i}}\right)\right]},
\end{align}
where $Q_{\nu}(\cdot,\cdot)$ is the generalized Marcum-Q function \cite[Eq.(2)]{TSAY}.
\end{proposition}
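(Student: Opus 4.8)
The plan is to exploit the representation $P = \mathbb{P}\big(\sum_{i=1}^N \lambda_i (Z_i+\alpha_i)^2 \le \gamma_0\big)$ together with a simple event-inclusion argument and the independence of the $Z_i$. The key observation is that whenever every summand satisfies $\lambda_i (Z_i+\alpha_i)^2 \le \gamma_0/N$, the sum is automatically at most $\gamma_0$; hence
\begin{align}
\Big\{\textstyle\sum_{i=1}^N \lambda_i (Z_i+\alpha_i)^2 \le \gamma_0\Big\} \supseteq \bigcap_{i=1}^N \Big\{\lambda_i (Z_i+\alpha_i)^2 \le \tfrac{\gamma_0}{N}\Big\}.
\end{align}
Taking probabilities and using that the $Z_i$, and therefore the events on the right-hand side, are mutually independent gives
\begin{align}
P \ge \prod_{i=1}^N \mathbb{P}\Big(\lambda_i (Z_i+\alpha_i)^2 \le \tfrac{\gamma_0}{N}\Big) = \prod_{i=1}^N \mathbb{P}\Big((Z_i+\alpha_i)^2 \le \tfrac{\gamma_0}{N\lambda_i}\Big),
\end{align}
where in the last step I divided through by $\lambda_i > 0$ (strict positivity being guaranteed by the positive-definiteness assumptions).

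Next I would identify each factor as the CDF, evaluated at $\gamma_0/(N\lambda_i)$, of a noncentral chi-squared random variable with one degree of freedom. Since $Z_i$ is standard Gaussian, $(Z_i+\alpha_i)^2$ is noncentral $\chi^2$ with one degree of freedom and noncentrality parameter $\alpha_i^2$, whose distribution function has the closed form $\mathbb{P}\big((Z_i+\alpha_i)^2 \le x\big) = 1 - Q_{1/2}\big(|\alpha_i|, \sqrt{x}\big)$ for $x \ge 0$, with $Q_{\nu}(\cdot,\cdot)$ the generalized Marcum-Q function. Substituting $x = \gamma_0/(N\lambda_i)$ into each factor then yields exactly
\begin{align}
P \ge \prod_{i=1}^N \left[1 - Q_{1/2}\left(\alpha_i, \sqrt{\tfrac{\gamma_0}{N\lambda_i}}\right)\right],
\end{align}
which is the claimed bound (we write $\alpha_i$ in place of $|\alpha_i|$ since the Marcum-Q function depends only on the magnitude of its first argument).

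The argument is short, so I do not anticipate a serious obstacle; the only point requiring genuine care is the identification of the per-coordinate CDF with the Marcum-Q function. One must either invoke the standard relation between the noncentral $\chi^2_1$ law and $Q_{1/2}$, or derive it directly by writing $\mathbb{P}\big((Z_i+\alpha_i)^2 \le x\big) = \mathbb{P}\big(|Z_i+\alpha_i| \le \sqrt{x}\big)$, expressing this through the standard normal CDF, and matching it to the integral/series definition of $Q_{1/2}$; the symmetry $Z_i \overset{d}{=} -Z_i$ is precisely what allows replacing $\alpha_i$ by $|\alpha_i|$. A secondary point worth noting is that the lower bound is strictly positive exactly because each $\lambda_i$ is finite and positive, which is also the feature that makes it serviceable in the later relative-error analysis as $\gamma_0 \to 0$.
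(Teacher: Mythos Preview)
Your proposal is correct and follows essentially the same route as the paper: the event-inclusion $\bigcap_i\{\lambda_i(Z_i+\alpha_i)^2\le\gamma_0/N\}\subset\{S_N\le\gamma_0\}$, independence of the $Z_i$, and identification of each factor with the noncentral $\chi^2_1$ CDF expressed via $Q_{1/2}$. Your added remark about $|\alpha_i|$ versus $\alpha_i$ is a nice clarification the paper omits.
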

\begin{proof}
In this proof, we consider the expression of $P$ 
\begin{align}
P = \mathbb{P}\left(S_N = \sum\limits_{i=1}^{N}{\lambda_i (Z_i + \alpha_i)^2} \leq \gamma_0 \right).
\end{align} 
We have
\begin{align}
\bigcap\limits_{i=1}^{N} \left\lbrace (Z_i + \alpha_i)^2 \leq \frac{\gamma_0}{N \lambda_i} \right\rbrace \subset \left\lbrace \sum_{i=1}^{N}{\lambda_i (Z_i + \alpha_i)^2} \leq \gamma_0 \right\rbrace.
\end{align}
Using the independence of $\{Z_i\}_{i=1}^{N}$, we can write
\begin{align}
P \geq \prod_{i=1}^{N}{\mathbb{P}\left((Z_i + \alpha_i)^2 \leq \frac{\gamma_0}{N \lambda_i} \right)} = \prod_{i=1}^{N}{F_{W_i}\left( \frac{\gamma_0}{N \lambda_i}\right)} ,
\end{align}
where $F_{W_i}(\cdot)$ is the CDF of the RV $W_i = (Z_i + \alpha_i)^2$. This corresponds to the CDF of a non-central Chi-squared RV with 1 degree of freedom. Therefore, we can write
\begin{align}
P \geq \prod_{i=1}^{N}{\left[1-Q_{\frac{1}{2}}\left(\alpha_i,\sqrt{\frac{\gamma_0}{N \lambda_i}}\right)\right]}.
\end{align}
\end{proof}
\section{Review of IS}\label{section3}
Let $f_{\bm{Z}}(\cdot)$ denote the PDF of $\bm{Z}=\left(Z_1, Z_2, \dots, Z_N\right)^{T}$. Then, we can write $P = \mathbb{E}[\mathbbm{1}_{(S_N \leq \gamma_0)}]$, where $\mathbbm{1}_{(\cdot)}$ is the indicator function and $\mathbb{E}[\cdot]$ is the expectation w.r.t. the probability measure under which the PDF of $\bm{Z}$ is $f_{\bm{Z}}(\cdot)$. Therefore, the naive MC estimator of $P$ is given by
\begin{align}
\hat{P}_{MC} = \frac{1}{M} \sum_{j=1}^{M}{\mathbbm{1}_{(S_N(\omega_j) \leq \gamma_0)}},
\end{align}
where $M$ is the number of MC samples, and $\{S_N(\omega_j)\}_{j=1}^{M}$ are i.i.d. realizations of the RV $S_N$. For each realization of $S_N$, the sequence $\{Z_i(\omega_j)\}_{i=1}^{N}$ is sampled independently according to the PDF $f_Z(\cdot)$. We start by re-writing $P$ as
\begin{align}
P = \mathbb{E}^*\left[\mathbbm{1}_{(S_N \leq \gamma_0)} \mathcal{L}(Z_1, \dots, Z_N) \right],
\end{align}
where $\mathcal{L}(Z_1, \dots, Z_N) = \prod_{i=1}^{N}{\frac{f_{Z_i}(Z_i)}{f_{Z_i}^*(Z_i)}}$ is the likelihood ratio and $\mathbb{E}^*[\cdot]$ is the expectation w.r.t. the probability measure under which the PDF of $\bm{Z}$ is the biased PDF $f_{\bm{Z}}^*(\cdot)$.
Thus, the IS estimator of $P$ is given by
\begin{align}\label{ISes}
\hat{P}_{IS} = \frac{1}{M^*} \sum_{j=1}^{M^*}{\mathbbm{1}_{(S_N(\omega_j) \leq \gamma_0)} \mathcal{L}(Z_1(\omega_j), \dots, Z_N(\omega_j))},
\end{align}
where the sequence $\{Z_i(\omega_j)\}_{i=1}^{N}$ is sampled according to the biased PDF $f_{\bm{Z}}^*(\cdot)$ for each realization $j=1,\dots,M^*$.

The efficiency of the proposed IS estimator compared to naive MC can be measured by many criteria. When it comes to evaluating very low probabilities, IS estimators endowed with the bounded relative error property are desirable in the field of rare events simulations. A naive MC estimator requires a number of samples $M$ that grows as $\mathcal{O}(P^{-1})$. To achieve the same accuracy, the number of simulation runs $M^{*}$ needed by an IS estimator with a bounded relative error remains bounded, independently of $P$. Mathematically speaking, we say that the IS estimator satisfies the bounded relative error criterion if the following statement holds \cite{ch3}
\begin{align}
\underset{\gamma_0 \rightarrow 0}{\limsup} \frac{\mathbb{E}^*[\mathbbm{1}_{(S_N \leq \gamma_0)} \mathcal{L}^2(Z_1,\dots,Z_N)]}{P^2} < +\infty.
\end{align}
To have a clear idea of the gain that the proposed IS estimator achieves compared to the naive MC one, we determine the number of simulation runs required by both estimators when the accuracy requirement is fixed, e.g., when the relative error of both estimators is assumed to be the same. We start by defining the relative error of both estimators as \cite{ch1}
\begin{align}
\varepsilon &= \frac{C}{P} \sqrt{\frac{P(1-P)}{M}}, \label{eqs1} \\
\varepsilon^* &= \frac{C}{P} \sqrt{\frac{\mathbb{V}^*[\mathbbm{1}_{(S_N \leq \gamma_0)} \mathcal{L}(\bm{Z})]}{M^*}}, \label{eqs2}
\end{align}
where we take $C = 1.96$, which corresponds to a $95\%$ confidence interval, and $\mathbb{V}^*$ denotes the variance w.r.t. the probability measure under which the PDF of $\bm{Z}$ is $f_{\bm{Z}}^*(\cdot)$.
\section{Proposed IS Scheme}\label{section4}
Our approach consists on shifting the mean and scaling the variance of each variate $\{Z_i\}_{i=1}^{N}$ so that the marginal biased PDF is written as
\begin{align}\label{bias}
f_{Z_i}^*(z) = \frac{1}{\sqrt{2\pi} \sigma_i} \exp\left(-\frac{1}{2} \left(\frac{z+\alpha_i}{\sigma_i}\right)^2\right).
\end{align}
While the original PDF of $Z_i$, $\forall i=1,\dots, N$, is a standard Gaussian, the biased PDF corresponds to a Gaussian with mean $-\alpha_i$ and variance $\sigma_i^2$. In our approach, we choose the parameter $\sigma_i$, hoping that the event of interest becomes no longer rare. A possible solution is to look for $\sigma_i$ in the form  $\sigma_i^2 = \theta \frac{\gamma_0}{\lambda_i}$, where $\theta$ is a positive parameter such that we have
\begin{align}
\mathbb{E}^*\left[\sum\limits_{i=1}^{N}{\lambda_i (Z_i+ \alpha_i)^2} \right] = \gamma_0.
\end{align}
Using the linearity of the expected value and the fact that, under the new probability measure, $\{Z_i + \alpha_i\}_{i=1}^{N}$ are zero mean Gaussian RVs with variance $\sigma_i^2$, we get
\begin{align}\label{choice}
\sigma_i = \sqrt{\frac{\gamma_0}{N \lambda_i}},~ i=1,\dots, N.
\end{align} 
The above value of $\sigma_i$ is clearly non-negative since the eigenvalues $\{\lambda_i\}_{i=1}^{N}$ are all non-negative. As $\gamma_0$ approaches zero, the values of $\sigma_i$ become smaller, leading to the reduction of the variance of the IS estimator. Defining the biased PDFs using the values of $\sigma_i$ obtained in (\ref{choice}), we show that our proposed IS estimator satisfies the bounded relative error property.
\begin{proposition}
Let the marginal biased PDFs be defined as in (\ref{bias}), and $\sigma_i$ as in (\ref{choice}). Then, the IS estimator (\ref{ISes}) of the probability $P$, given by (\ref{P}), satisfies
\begin{align}
\underset{\gamma_0 \rightarrow 0}{\limsup} \frac{\mathbb{E}^*[\mathbbm{1}_{(S_N \leq \gamma_0)} \mathcal{L}^2(\bm{Z})]}{P^2} \leq \prod_{i=1}^{N}{\frac{\pi e}{\alpha_i^{2}}} < +\infty.
\end{align}
\end{proposition}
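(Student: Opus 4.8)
The plan is to bound the second moment $\mathbb{E}^{*}[\mathbbm{1}_{(S_N\le\gamma_0)}\mathcal{L}^{2}(\bm{Z})]$ by something that, up to a constant independent of $\gamma_0$, matches the square of the lower bound of Proposition~\ref{prop1}, so that the ratio in the bounded-relative-error criterion stays bounded as $\gamma_0\to0$.

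First I would use $\mathcal{L}(\bm{Z})\,f_{\bm{Z}}^{*}(\bm{Z})=f_{\bm{Z}}(\bm{Z})$ to rewrite $\mathbb{E}^{*}[\mathbbm{1}_{(S_N\le\gamma_0)}\mathcal{L}^{2}(\bm{Z})]=\mathbb{E}[\mathbbm{1}_{(S_N\le\gamma_0)}\mathcal{L}(\bm{Z})]$, the expectation on the right being under the nominal measure. The key point is that $\mathcal{L}$ cannot grow on the event $\{S_N\le\gamma_0\}$: because $\lambda_i\sigma_i^{2}=\gamma_0/N$ for the choice \eqref{choice}, the constraint $\sum_{i=1}^{N}\lambda_i(Z_i+\alpha_i)^{2}\le\gamma_0$ forces $\sum_{i=1}^{N}(Z_i+\alpha_i)^{2}/\sigma_i^{2}\le N$, so that $\mathcal{L}(\bm{Z})=\prod_{i=1}^{N}\sigma_i\exp\!\big(-\tfrac12 Z_i^{2}+\tfrac{(Z_i+\alpha_i)^{2}}{2\sigma_i^{2}}\big)\le e^{N/2}\prod_{i=1}^{N}\sigma_i$ on that event (and, since $|Z_i+\alpha_i|\le\sqrt{N}\,\sigma_i$ there, the factor $e^{-Z_i^{2}/2}$ may be retained as $e^{-(\alpha_i-\sqrt{N}\sigma_i)^{2}/2}$ if a sharper constant is desired). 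Hence $\mathbb{E}^{*}[\mathbbm{1}_{(S_N\le\gamma_0)}\mathcal{L}^{2}(\bm{Z})]\le e^{N/2}\big(\prod_{i=1}^{N}\sigma_i\big)P$, and the relative-error ratio is at most $e^{N/2}\big(\prod_{i=1}^{N}\sigma_i\big)/P$.

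Next I would feed in Proposition~\ref{prop1} to show $P$ is, up to a constant, at least $\prod_{i=1}^{N}\sigma_i$. Since $1-Q_{1/2}(\alpha_i,\sigma_i)=\mathbb{P}(|Z_i+\alpha_i|\le\sigma_i)$ is the standard Gaussian mass of the interval $[\alpha_i-\sigma_i,\alpha_i+\sigma_i]$, which shrinks as $\gamma_0\to0$, bounding it below by its length $2\sigma_i$ times a suitable value of the Gaussian density on the interval gives $1-Q_{1/2}(\alpha_i,\sigma_i)\ge c_i\,\sigma_i$ for all small enough $\gamma_0$, with an explicit $c_i>0$ whose leading behaviour is of order $e^{-\alpha_i^{2}/2}$. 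Substituting, the $\prod_i\sigma_i$ cancels and $\limsup_{\gamma_0\to0}\mathbb{E}^{*}[\mathbbm{1}_{(S_N\le\gamma_0)}\mathcal{L}^{2}(\bm{Z})]/P^{2}\le e^{N/2}\prod_{i=1}^{N}c_i^{-1}<+\infty$; collecting the numerical constants --- a likely ingredient being the elementary estimate $e^{-x}\le 1/x$ for $x>0$, used to turn a leftover exponential into a factor $\alpha_i^{-2}$ --- then yields the stated bound $\prod_{i=1}^{N}\pi e/\alpha_i^{2}$.

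The qualitative boundedness follows at once from the two bounds above, so the only real work is the bookkeeping that lands precisely on $\pi e/\alpha_i^{2}$: one must choose, in the Proposition~\ref{prop1} estimate, a density value on $[\alpha_i-\sigma_i,\alpha_i+\sigma_i]$ that is simultaneously a valid lower bound and sharp enough, and one must check that the $o(1)$ corrections coming from $\sigma_i\to0$ (both in the bound on $\mathcal{L}$ and in the lower bound for $1-Q_{1/2}(\alpha_i,\sigma_i)$) are controlled when passing to the $\limsup$. Everything rests on the single observation that the event $\{S_N\le\gamma_0\}$ localizes the coordinates through $\sum_{i=1}^{N}(Z_i+\alpha_i)^{2}/\sigma_i^{2}\le N$.
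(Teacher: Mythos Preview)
Your argument follows the paper's proof in all essentials. The key step in both is the observation that, with $\sigma_i^2=\gamma_0/(N\lambda_i)$, the event $\{S_N\le\gamma_0\}$ is exactly $\big\{\sum_{i}(Z_i+\alpha_i)^2/\sigma_i^2\le N\big\}$, so that after dropping $e^{-Z_i^2/2}\le1$ one gets the deterministic bound $\mathbbm{1}_{(S_N\le\gamma_0)}\mathcal{L}(\bm Z)\le e^{N/2}\prod_i\sigma_i$. The paper then simply squares this and takes $\mathbb{E}^*$, obtaining $\mathbb{E}^*[\mathbbm{1}\,\mathcal{L}^2]\le e^{N}\prod_i\sigma_i^{2}$, whereas you pass through $\mathbb{E}^*[\mathbbm{1}\,\mathcal{L}^2]=\mathbb{E}[\mathbbm{1}\,\mathcal{L}]\le e^{N/2}\big(\prod_i\sigma_i\big)P$ and therefore need only one power of the Proposition~\ref{prop1} lower bound instead of two. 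That refinement is legitimate and in fact gives the square root of the paper's constant.

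Where your write-up diverges is the final bookkeeping. The paper does \emph{not} use the density-times-length estimate; it quotes an asymptotic expansion for the Marcum $Q$-function from \cite{TSAY} which, for $\nu=1/2$, gives $1-Q_{1/2}(\alpha_i,\sigma_i)\sim\alpha_i\sigma_i/\sqrt{\pi}$ as $\sigma_i\to0$, and this is precisely what produces the factor $\alpha_i^{-2}$ in the stated bound. Your elementary route yields instead the leading constant $c_i\sim\sqrt{2/\pi}\,e^{-\alpha_i^2/2}$, so your final bound comes out as $\prod_i C\,e^{\alpha_i^2/2}$ (or its square root, with your sharpening) rather than $\prod_i\pi e/\alpha_i^{2}$. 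The suggested fix ``$e^{-x}\le 1/x$'' cannot bridge this: you would need $e^{\alpha_i^2/2}$ bounded above by a multiple of $\alpha_i^{-2}$, which is false. So your argument proves bounded relative error (the qualitative statement) by the same mechanism as the paper, but to recover the \emph{specific} constant in the proposition you must replace the density-times-length step by the Marcum-$Q$ asymptotic the paper invokes.
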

\begin{proof}
The likelihood ratio can be upper bounded by
\begin{align}
\nonumber &\mathcal{L}(\bm{Z}) = \prod_{i=1}^{N}{\frac{f_{Z_i}(Z_i)}{f_{Z_i}^*(Z_i)}} \\
\nonumber &= \left(\prod_{i=1}^{N}{\sigma_i}\right) \exp\left(\frac{1}{2} \sum_{i=1}^{N}{\left(\frac{Z_i+\alpha_i}{\sigma_i}\right)^2}-\frac{1}{2} \sum_{i=1}^{N}{Z_i^2}\right)\\
&\leq \left(\prod_{i=1}^{N}{\sigma_i}\right) \exp\left(\frac{1}{2} \sum_{i=1}^{N}{\left(\frac{Z_i+\alpha_i}{\sigma_i}\right)^2}\right).
\end{align}
With the choice of $\sigma_i$ given in (\ref{choice}), we get
\begin{align}
\mathcal{L}(\bm{Z}) \leq \left(\frac{\gamma_0}{N}\right)^{\frac{N}{2}} \left(\prod_{i=1}^{N}{\frac{1}{\sqrt{\lambda_i}}}\right) \exp\left(\frac{N}{2 \gamma_0} \sum_{i=1}^{N}{\lambda_i \left(Z_i+\alpha_i\right)^2}\right).
\end{align}
Using the above upper bound of the likelihood ratio, we write
\begin{align}
\mathbbm{1}_{\left(\sum\limits_{i=1}^{N}{\lambda_i (Z_i+\alpha_i)^2} \leq \gamma_0\right)} \mathcal{L}(\bm{Z}) \leq \left(\frac{\gamma_0}{N}\right)^{\frac{N}{2}} \left(\prod_{i=1}^{N}{\frac{1}{\sqrt{\lambda_i}}}\right) e^{\frac{N}{2}}.
\end{align}
Thus, we obtain the upper bound
\begin{align}\label{1}
\mathbb{E}^*[\mathbbm{1}_{\left(S_N \leq \gamma_0\right)} \mathcal{L}^2(\bm{Z})] \leq \left(\frac{\gamma_0}{N}\right)^{N} \left(\prod_{i=1}^{N}{\frac{1}{\lambda_i}}\right) e^{N}.
\end{align}
From Proposition \ref{prop1}, we have 
\begin{align}
P \geq  \prod_{i=1}^{N}{\left[1-Q_{\frac{1}{2}}\left(\alpha_i,\sqrt{\frac{\gamma_0}{N \lambda_i}}\right)\right]}.
\end{align}
Using \citep[Eq.(8)]{TSAY}, we have the  asymptotic expansion around $b=0$ of $Q_{\nu}(a,b)$, i.e.,
\begin{align}
Q_{\nu}(a,b) \underset{b \rightarrow 0}{\sim} 1-\frac{1}{\Gamma(\nu+1)} \left(\frac{b^2}{2}\right)^{\nu} \left(\frac{a^2}{2}\right)^{1-\nu}.
\end{align}
Therefore, as $\gamma_0 \rightarrow 0$, we have
\begin{align}
P \geq  \left(\frac{1}{N \pi}\right)^{\frac{N}{2}} \gamma_0^{\frac{N}{2}} \left(\prod_{i=1}^{N}{\frac{\alpha_i}{\sqrt{\lambda_i}}}\right),
\end{align}
and we can write
\begin{align}\label{2}
\frac{1}{P^2} \leq \left(N \pi\right)^N \gamma_0^{-N} \left(\prod_{i=1}^{N}{\frac{\lambda_i}{\alpha_i^{2}}}\right).
\end{align}
By combining (\ref{1}) and (\ref{2}), we obtain
\begin{align}
\underset{\gamma_0 \rightarrow 0} \limsup \frac{\mathbb{E}^*[\mathbbm{1}_{\left(S_N \leq \gamma_0\right)} \mathcal{L}^2(\bm{Z})]}{P^2} \leq \left(\prod_{i=1}^{N}{\frac{\pi e}{\alpha_i^{2}}}\right) < +\infty.
\end{align}
\end{proof}
\begin{remark}
If $\bm{\mu} = 0$, then $\bm{\alpha} = 0$, the biased PDF is Gaussian with zero mean and variance $\sigma_i^2$. In the proof, we use the following lower bound
\begin{align}
P = \mathbb{P}(\bm{X}^{T} \bm{\Sigma} \bm{X} \leq \gamma_0) \geq \frac{1}{\pi^{\frac{N}{2}}} \prod_{i=1}^{N}{\gamma\left(\frac{1}{2}, \frac{\gamma_0}{ 2 N \lambda_i}\right)},
\end{align}
where $\gamma(\cdot,\cdot)$ is the lower incomplete Gamma function \cite[Eq. (8.350.1)]{TII}. As $\gamma_0 \rightarrow 0$, we have the upper bound
\begin{align}
\frac{1}{P^2} \leq \left(\frac{N \pi}{2}\right)^N \left(\prod_{i=1}^{N}{\lambda_i}\right) \gamma_0^{-N}.
\end{align}
Our IS estimator also has the bounded relative error property in this case
\begin{align}
\underset{\gamma_0 \rightarrow 0} \limsup \frac{\mathbb{E}^*[\mathbbm{1}_{\left(S_N \leq \gamma_0\right)} \mathcal{L}^2(\bm{Z})]}{P^2} \leq \left(\frac{\pi e}{2}\right)^N < +\infty.
\end{align}
\end{remark}
\begin{remark}
Our proposed approach is still valid even if we consider the complex case for which the probability is
\begin{align}
P = \mathbb{P}(\bm{X}^{*} \bm{\Sigma} \bm{X} \leq \gamma_0),
\end{align}
where $\bm{X}$ is a complex Gaussian random vector, $\bm{X}^{*}$ is its conjugate transpose, and $\bm{\Sigma}$ is a Hermitian positive definite matrix. The complex setting appears in many applications involving wireless techniques \cite{Pablo2018}-\cite{Al-Naffouri2016}. Our estimator maintains the bounded relative error in the complex-valued case.
\end{remark}
\begin{remark}
The derived upper bound for the relative error suffers from an exponential deterioration w.r.t. $N$. Note that this means that the bound becomes useless for large $N$, but this does not mean our approach is not valid in this regime. We show in the simulations section, that even when $N$ is large, our approach still works numerically (see next section).
\end{remark}
\section{Numerical Simulations}\label{section5}
To show the accuracy and efficiency of the proposed IS scheme, we plot the CDF given in \eqref{P} using our proposed approach and we compare it to naive MC, as well as some baselines from the literature. We consider a similar setting as in \cite{Ramirez2019} where the elements of the matrices are $[\Sigma]_{i,j} = \xi^{|i-j|}$ and $[\Sigma_X]_{i,j} = \rho^{|i-j|}$ where $\xi, \rho \in (0,1)$. With this construction, we ensure that both matrices are positive definite. For the sake of simplicity, we consider the mean vector $\mu$ to be identical for all variates. All simulations were obtained using MATLAB operated on a 1.8 GHz Intel Core i5 CPU.
\subsection{Comparison with MC}
We start by showing the computational gain of our proposed approach compared to MC. To this end, we plot the CDF, given in \eqref{P}, as a function of the threshold $\gamma_{0}$ for different values of the quadrature order $N \in \{10, 20, 30\}$ when $\xi = 0.4$, $\rho = 0.8$ and $\mu_i = 1,~ \forall i$. We can clearly see that although using fewer simulation runs ($10^{4}$ less), our proposed IS estimator accurately estimates the CDF, unlike naive MC, which completely fails to estimate probabilities smaller than $10^{-7}$.
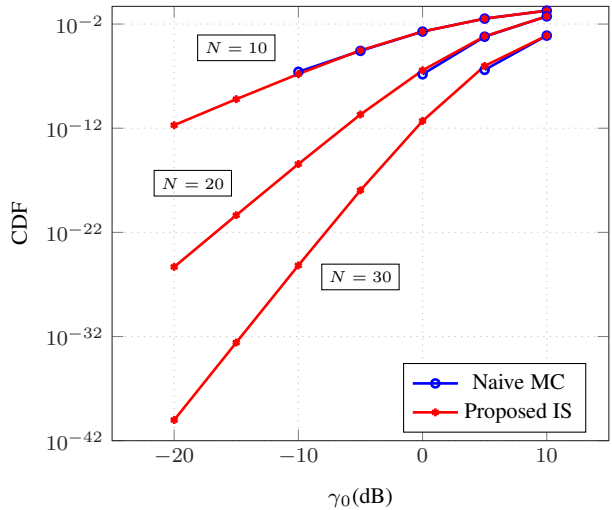
\begin{figure}[h]
\centering
\setlength\figureheight{0.35\textwidth}
\setlength\figurewidth{0.4\textwidth}
\scalefont{0.7}
\begin{tikzpicture}

\begin{semilogyaxis}[%
width=\figurewidth,
height=\figureheight,
scale only axis,
every outer x axis line/.append style={darkgray!60!black},
every x tick label/.append style={font=\color{darkgray!60!black}},
xmin=-25, 
xmax=15,
xlabel={$\gamma{}_{0}\text{(dB)}$},
xmajorgrids,
every outer y axis line/.append style={darkgray!60!black},
every y tick label/.append style={font=\color{darkgray!60!black}},
ymin=1E-42, ymax=0.5,
yminorticks=true,
ymajorgrids,
yminorgrids,
grid style={dotted},
ylabel={CDF},
legend style={at={(0.58574218750001,0.0184593784825617)},anchor=south west,draw=black,fill=white,align=left}]
\addplot [
color=blue,
solid,
line width=1.0pt,
mark size=1.5pt,
mark=o,
mark options={solid},
]
table[row sep=crcr]{%
-20	0\\
-15	0\\
-10	0.00000026\\
-5	0.00002657\\ 
0   0.0019\\
5	0.0333\\
10	0.1925\\
};
\addlegendentry{Naive MC};
\addplot [
color=red,
solid,
line width=1.0pt,
mark size=1.5pt,
mark=asterisk,
mark options={solid},
]
coordinates{
 (-20,1.9646E-12)(-15,6.1648E-10)(-10,1.6050E-07)(-5,2.9640E-05)(0,0.0019)(5,0.0333)(10,0.1925)
};
\addlegendentry{Proposed IS};
\addplot [
color=blue,
solid,
line width=1.0pt,
mark size=1.5pt,
mark=o,
mark options={solid},
forget plot
]
coordinates{
(-20,0)(-15,0)(-10,0)(-5,0)(0,1.5000E-07)(5,6.1354E-04)(10,0.0550)
};
\addplot [
color=red,
solid,
line width=1.0pt,
mark size=1.5pt,
mark=asterisk,
mark options={solid},
forget plot
]
coordinates{
(-20,4.8228E-26)(-15,4.4821E-21)(-10,3.6450E-16)(-5,2.0824E-11)(0,3.6258E-07)(5, 6.7354E-04)(10,0.0550)
};
\addplot [
color=blue,
solid,
line width=1.0pt,
mark size=1.5pt,
mark=o,
mark options={solid},
forget plot
]
coordinates{
(-20,0)(-15,0)(-10,0)(-5,0)(0,0)(5,4E-07)(10,7.8013E-04) 
};
\addplot [
color=red,
solid,
line width=1.0pt,
mark size=1.5pt,
mark=asterisk,
mark options={solid},
forget plot
]
coordinates{
(-20,9.6357E-41)(-15,2.6672E-33)(-10,6.7765E-26)(-5,1.0989E-18)(0,4.9436E-12)(5,9.5018E-07)(10,7.8013E-4) 
};
\node[black,draw,below] at (axis cs:-15,1.0e-03){{\tiny $N=10$}};
\node[black,draw,below] at (axis cs:-18.5,1.0e-16){{\tiny $N=20$}};
\node[black,draw,below] at (axis cs:-5,1.0e-25){{\tiny $N=30$}};
\end{semilogyaxis}
\end{tikzpicture}
\caption{CDF, as defined in \eqref{P}, for different values of $N$, $\xi = 0.4$, $\rho = 0.8$ and $\mu_i = 1,~ \forall i$. Number of samples $M=10^8$ and $M^*=10^4$.}
\label{fig1}
\end{figure}

We plot the required number of simulation runs for a $5\%$ accuracy requirement in Fig. \ref{fig2} where the arrows in the plot indicate the increment of the value of $N$. This plots confirms that our proposed IS scheme outperforms naive MC. For a fixed threshold $\gamma_{0}$, we notice that the number of simulation runs of IS is far less than the one needed by naive MC to achieve the same accuracy. We also note that while the number of samples of naive MC continues to increase at a high rate as the probability becomes smaller, the number of samples required by the IS estimator remains almost constant as a result of the bounded relative error property. For instance, for $N=20$, and $\gamma_0 = -5dB$, the gain in terms of number of simulation runs of our method compared to MC is around $10^{10}$.
\begin{figure}[h]
\centering
\setlength\figureheight{0.35\textwidth}
\setlength\figurewidth{0.4\textwidth}
\scalefont{0.7}
\begin{tikzpicture}

\begin{semilogyaxis}[%
width=\figurewidth,
height=\figureheight,
scale only axis,
every outer x axis line/.append style={darkgray!60!black},
every x tick label/.append style={font=\color{darkgray!60!black}},
xmin=-25, 
xmax=15,
xlabel={$\gamma{}_{0}\text{(dB)}$},
xmajorgrids,
every outer y axis line/.append style={darkgray!60!black},
every y tick label/.append style={font=\color{darkgray!60!black}},
ymin=100, ymax=1e+44,
yminorticks=true,
ymajorgrids,
yminorgrids,
grid style={dotted},
ylabel={Simulation Runs},
legend style={at={(0.58574218750001,0.724593784825617)},anchor=south west,draw=black,fill=white,align=left}]
\addplot [
color=blue,
solid,
line width=1.0pt,
mark size=1.5pt,
mark=o,
mark options={solid},
]
coordinates{
 (-20,7.6057e+14)(-15,2.5360e+12)(-10,9.4288e+09)(-5,4.9698e+07)(0,6.2041e+05)(5,3.1246e+04)(10,5.6984e+03) 
};
\addlegendentry{Naive MC};
\addplot [
color=red,
solid,
line width=1.0pt,
mark size=1.5pt,
mark=asterisk,
mark options={solid},
]
coordinates{
 (-20,3.0738e+03)(-15,3.1785e+03)(-10,3.3562e+03)(-5,4.1520e+03)(0,3.6679e+03)(5,3.0564e+03)(10,3.0918e+03) 
};
\addlegendentry{Proposed IS};
\addplot [
color=blue,
solid,
line width=1.0pt,
mark size=1.5pt,
mark=o,
mark options={solid},
forget plot
]
coordinates{
(-20,3.1862e+28)(-15,3.4284e+23)(-10,4.2158e+18)(-5,7.3793e+13)(0,4.2380e+09)(5,2.2799e+06)(10,2.7188e+04)
};
\addplot [
color=red,
solid,
line width=1.0pt,
mark size=1.5pt,
mark=asterisk,
mark options={solid},
forget plot
]
coordinates{
(-20,4.7426e+03)(-15,5.0329e+03)(-10,5.6532e+03)(-5,7.6764e+03)(0,7.8764e+03)(5,7.664e+03)(10,7.84e+03)
};
\addplot [
color=blue,
solid,
line width=1.0pt,
mark size=1.5pt,
mark=o,
mark options={solid},
forget plot
]
coordinates{
(-20,1.5947e+43)(-15,5.7612e+35)(-10,2.2676e+28)(-5,1.3984e+21)(0,3.1083e+14)(5,1.6172e+09)(10,5.8438e+05) 
};
\addplot [
color=red,
solid,
line width=1.0pt,
mark size=1.5pt,
mark=asterisk,
mark options={solid},
forget plot
]
coordinates{
(-20,6.1909e+03)(-15,6.4049e+03)(-10,7.0647e+03)(-5,8.3024e+03)(0,8.3044e+03)(5,8.1044e+03)(10,8.234e+03) 
};
\draw[thick,arrows={->}] (axis cs:-18,1.0e09) -- (axis cs:0,1.0e22);
\draw[thick,arrows={->}] (axis cs:-16,1.0e03) -- (axis cs:-19,1.0e05);
\node[black,draw,below] at (axis cs:-17,1.0e09){{\tiny $N=10, 20, 30$}};
\end{semilogyaxis}
\end{tikzpicture}%
\caption{Number of required simulation runs for $5\%$ relative error for different values of $N$, when $\xi = 0.4$, $\rho = 0.8$ and $\mu_i = 1,~ \forall i$. The arrows indicate the increment of $N$.}
\label{fig2}
\end{figure}
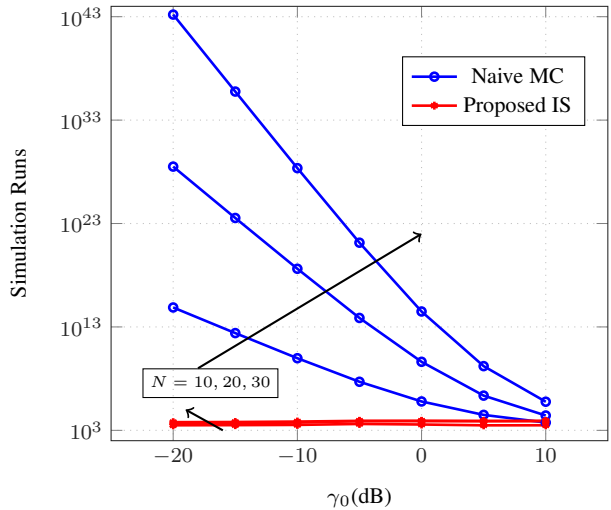
\vspace{-0.75cm}
\subsection{Comparison with other baselines}
In this section, we compare our proposed approach to the approximations presented in \cite{imhof1961}, \cite{Ramirez2019}, and the saddle-point approximation (spa) using Newton's method presented in \cite{Paolella2007}. For $\gamma_0 = 5dB$, we plot, in Figs. \ref{fig3} and \ref{fig4}, the CDF for two different sets of values of the parameters $\xi$, $\rho$, and $\bm{\mu}$ as a function of the quadrature order $N$. For the approach presented, in \cite{Ramirez2019}, we used two values of the parameter governing the accuracy of the approximation, i.e. $m=200$ and $m=500$. From Fig. \ref{fig3}, we observe that although all methods match for relatively high probabilities, they start to give inaccurate estimation of the probability of interest as soon as it becomes very small, unlike our proposed approach. In Fig. \ref{fig4}, the gap in accuracy becomes more obvious even for small probabilities, which indicates that some of these method are sensitive to the parameters of the simulation. Finally, we report, in table \ref{tab}, the average CPU time in seconds based on ten evaluations of the CDF at $\gamma_0 = 5 dB$ when $\xi=0.4$, $\rho=0.8$ and $\mu_i = 1, ~\forall i$. The results of our proposed approach are based on a number of samples equal to $M^*=10^4$. We can clearly see that our proposed approach is the fastest, followed by saddle point approximation (spa), then Imhof method. We can see that the choice of the parameter $m$ in \cite{Ramirez2019} affects also the CPU time. For all methods, the CPU times increases as the order of quadrature $N$ increases.
\section{Conclusion}
In this paper, we proposed an efficient IS estimator for the left tail of positive quadratic forms in Gaussian RVs. We showed that our estimator is endowed with the bounded relative error property, making it an appealing alternative to MC. Numerical simulations show a clear gain in terms of simulation runs, confirming the efficiency of our scheme compared to MC. Our approach also exhibits better performance than several existing methods in terms of accuracy and CPU time.
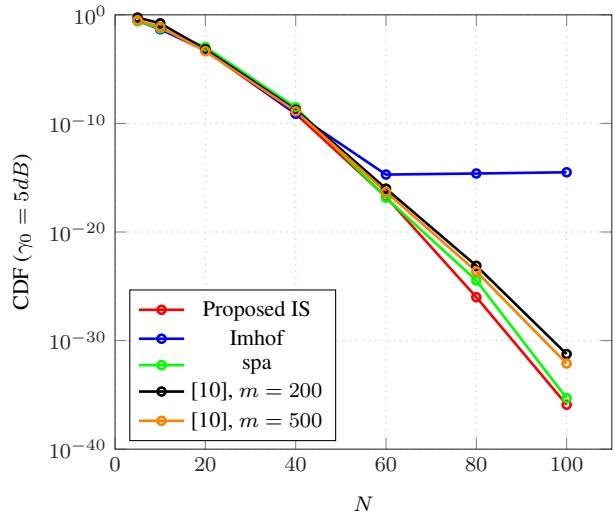
\begin{figure}[h]
\centering
\setlength\figureheight{0.35\textwidth}
\setlength\figurewidth{0.4\textwidth}
\scalefont{0.7}
\begin{tikzpicture}

\begin{semilogyaxis}[%
width=\figurewidth,
height=\figureheight,
scale only axis,
every outer x axis line/.append style={darkgray!60!black},
every x tick label/.append style={font=\color{darkgray!60!black}},
xmin=0, 
xmax=110,
xlabel={$N$},
xmajorgrids,
every outer y axis line/.append style={darkgray!60!black},
every y tick label/.append style={font=\color{darkgray!60!black}},
ymin=1E-40, ymax=1,
yminorticks=true,
ymajorgrids,
yminorgrids,
grid style={dotted},
ylabel={CDF ($\gamma_0 = 5 dB$)},
legend style={at={(0.028574218750001,0.0184593784825617)},anchor=south west,draw=black,fill=white,align=left}]
\addplot [
color=red,
solid,
line width=1.0pt,
mark size=1.5pt,
mark=o,
mark options={solid},
]
coordinates{
 (5,0.2859)(10,0.0508)(20,5.3505e-04)(40,7.7312e-10)(60,1.7191e-17)(80,9.9582e-27)(100,1.2432e-36)
};
\addlegendentry{Proposed IS};
\addplot [
color=blue,
solid,
line width=1.0pt,
mark size=1.5pt,
mark=o,
mark options={solid},
]
coordinates{
 (5,0.2882)(10,0.0472)(20,5.9554e-04)(40,7.9750e-10)(60,1.9429e-15)(80,2.4425e-15)(100,3.1641e-15)
};
\addlegendentry{Imhof};
\addplot [
color=green,
solid,
line width=1.0pt,
mark size=1.5pt,
mark=o,
mark options={solid},
]
coordinates{
 (5,0.2559)(10,0.0568)(20,0.0011)(40,3.0313e-09)(60,1.4206e-17)(80,3.4167e-25)(100,5.0322e-36)
};
\addlegendentry{spa};
\addplot [
color=black,
solid,
line width=1.0pt,
mark size=1.5pt,
mark=o,
]
coordinates{
 (5,0.562101)(10,0.165241)(20,6.83056e-04)(40,1.79933e-09)(60,9.82048e-17)(80,7.73307e-24)(100,5.74254e-32)
};
\addlegendentry{[10], $m=200$};
\addplot [
color=orange,
solid,
line width=1.0pt,
mark size=1.5pt,
mark=o,
mark options={solid},
]
coordinates{
(5,0.315604)(10,0.0784907)(20,4.4636e-04)(40,1.31875e-09)(60,4.96369e-17)(80,2.33074e-24)(100,7.70926e-33)
};
\addlegendentry{[10], $m=500$};
\end{semilogyaxis}
\end{tikzpicture}%
\caption{CDF (for $\gamma_0 = 5 dB$) when $\xi=0.4$, $\rho=0.8$ and $\mu_i = 1, ~\forall i$. Number of samples of IS $M^*=10^4$.}
\label{fig3}
\end{figure}

\begin{figure}[h]
\centering
\setlength\figureheight{0.35\textwidth}
\setlength\figurewidth{0.4\textwidth}
\scalefont{0.7}
\begin{tikzpicture}

\begin{semilogyaxis}[%
width=\figurewidth,
height=\figureheight,
scale only axis,
every outer x axis line/.append style={darkgray!60!black},
every x tick label/.append style={font=\color{darkgray!60!black}},
xmin=0, 
xmax=110,
xlabel={$N$},
xmajorgrids,
every outer y axis line/.append style={darkgray!60!black},
every y tick label/.append style={font=\color{darkgray!60!black}},
ymin=1E-80, ymax=1,
yminorticks=true,
ymajorgrids,
yminorgrids,
grid style={dotted},
ylabel={CDF ($\gamma_0 = 5 dB$)},
legend style={at={(0.0018574218750001,0.0104593784825617)},anchor=south west,draw=black,fill=white,align=left}]
\addplot [
color=red,
solid,
line width=1.0pt,
mark size=1.5pt,
mark=o,
mark options={solid},
]
coordinates{
 (5,0.0165)(10,1.3653e-04)(20,2.8906e-10)(40,3.0297e-24)(60,8.3469e-41)(80,2.8498e-59)(100,4.5781e-78)
};
\addlegendentry{Proposed IS};
\addplot [
color=blue,
solid,
line width=1.0pt,
mark size=1.5pt,
mark=o,
mark options={solid},
]
coordinates{
 (5,0.0174)(10,1.3375e-04)(20,3.0066e-10)(40,1.6098e-15)(60,2.2760e-15)(80,3.6082e-15)(100,4.1078e-15)
};
\addlegendentry{Imhof};
\addplot [
color=green,
solid,
line width=1.0pt,
mark size=1.5pt,
mark=o,
mark options={solid},
]
coordinates{
 (5,0.0179)(10,0.0571)(20,7.7984e-05)(40,9.2916e-22)(60,1.0742e-24)(80,2.0334e-46)(100,6.1835e-54)
};
\addlegendentry{spa};
\addplot [
color=black,
solid,
line width=1.0pt,
mark size=1.5pt,
mark=o,
]
coordinates{
 (5,0.9556)(10,4.83e-03)(20,1.59575e-05)(40,1.42963e-15)(60,2.23424e-21)(80,9.78779e-39)(100,1.02277e-50)
};
\addlegendentry{[10], $m=200$};
\addplot [
color=orange,
solid,
line width=1.0pt,
mark size=1.5pt,
mark=o,
mark options={solid},
]
coordinates{
(5,0.77404)(10,2.73e-03)(20,1.20944e-06)(40,8.70858e-16)(60,7.82196e-22)(80,1.16934e-51)(100,4.69946e-52)
};
\addlegendentry{[10], $m=500$};
\end{semilogyaxis}
\end{tikzpicture}%
\caption{CDF (for $\gamma_0 = 5 dB$) when $\xi=0.1$, $\rho=0.5$ and $\mu_i = 2, ~\forall i$. Number of samples of IS $M^*=10^4$.}
\label{fig4}
\end{figure}

\begin{table}[H]
\centering
\begin{tabular}{ c c c c c c c }
\hline 
$N$                 & 10    & 20    & 40    & 60    & 80    & 100   \\ \hline \hline 
IS ($M^*=10^4$)    & 0.053 & 0.055 & 0.063 & 0.068 & 0.072 & 0.078 \\  
Imhof             & 0.323 & 0.357 & 0.361 & 0.432 & 0.532 & 0.632 \\  
spa               & 0.073 & 0.078 & 0.083 & 0.086 & 0.092 & 0.096 \\  
{[}10{]}, $m=200$ & 0.282 & 0.306 & 0.359 & 0.404 & 0.435 & 0.486 \\  
{[}10{]}, $m=500$ & 2.32  & 2.34  & 2.39  & 2.5   & 2.72  & 2.82  \\ \hline
\end{tabular}
\caption{Average CPU time (s) for a single computation of the CDF at $\gamma_0 = 5 dB$ when $\xi=0.4$, $\rho=0.8$ and $\mu_i = 1, ~\forall i$.}
\label{tab}
\end{table}

\vspace{-1.2cm}
\nocite{*}
\bibliographystyle{IEEEtran}
\bibliography{reference}
\end{document}